\pdfoutput=1

\documentclass[runningheads]{llncs}
\usepackage[T1]{fontenc}

\usepackage{cite}
\usepackage{amsmath,amssymb,amsfonts}
\usepackage{algorithm}
\usepackage[noend]{algorithmic}
\usepackage{graphicx}
\usepackage{textcomp}
\usepackage{cleveref}
\usepackage{xcolor}
\usepackage{comment}
\def\BibTeX{{\rm B\kern-.05em{\sc i\kern-.025em b}\kern-.08em
    T\kern-.1667em\lower.7ex\hbox{E}\kern-.125emX}}

\newtheorem{thm}{Theorem}

\renewenvironment{proof}{\noindent{\bf Proof:}}{\hspace*{\fill}\rule{6pt}{6pt}\medskip}
\newcommand{\bemph}[1]{\textbf{\textit{#1}}}

\usepackage[]{todonotes} 
\definecolor{okabe1}{HTML}{000000}
\definecolor{okabe2}{HTML}{E69F00}
\definecolor{okabe3}{HTML}{56B4E9}
\definecolor{okabe4}{HTML}{009E73}
\definecolor{okabe5}{HTML}{F0E442}
\definecolor{okabe6}{HTML}{0072B2}
\definecolor{okabe7}{HTML}{D55E00}
\definecolor{okabe8}{HTML}{CC79A7}
\newcommand{\claire}[2][inline]{\smallskip\todo[color=okabe8!50,#1]{\sf \textbf{Claire:} #2}}
\newcommand{\mike}[2][inline]{\smallskip\todo[color=okabe2!50,#1]{\sf \textbf{Mike:} #2}}
\newcommand{\gonzalo}[2][inline]{\smallskip\todo[color=okabe4!50,#1]{\sf \textbf{Gonzalo:} #2}}
\newcommand{\ham}{{\mathrm{hd}}}

\DeclareMathOperator{\polylog}{polylog}

\begin{document}
%
%
\pagestyle{plain}
%
%

\title{Improved Low-Overhead Communication-Efficient String Reconciliation
       and Edit Distance}

\author{Anonymous author(s)}

\author{Michael T. Goodrich\inst{1}\orcidID{0000-0002-8943-191X} \and
Gonzalo Navarro\inst{2}\orcidID{0000-0002-2286-741X} \and
Claire A. To\inst{1}\orcidID{0009-0008-9102-2219}}
%
%
\institute{University of California, Irvine, Irvine CA, USA
\and
Dept. of Computer Science, University of Chile, Santiago, Chile}

\maketitle

\begin{abstract}
Suppose two parties, Alice and Bob, hold long character strings, $X$ and $Y$, respectively, and they are interested in determining how similar $X$ and $Y$ are. {Moreover, they want to exchange the strings with cost proportional to their degree of dissimilarity.} Such problems arise, for example, in database and file system synchronization operations, as well
as in DNA sequence comparisons.
Since the strings are long, we are interested in methods that are communication-efficient and have low overhead in terms of the computations that Alice and Bob must perform, when the strings are similar enough.
In this paper, we provide a simple low-overhead communication-efficient algorithms for such string reconciliation and edit distance
problems, determining the edit distance $k$ between $X$ and~$Y$ using only $O(k\log^3 n)$ bits of communication
and $O(n\log k)$ time overhead, with high probability.

\keywords{distributed algorithms \and randomized algorithms \and edit distance \and hash functions \and communication complexity \and string reconciliation}

\end{abstract}


\section{Introduction}

A fundamental challenge in distributed computing lies in managing distributed data across communication networks. A problem arising frequently is that of checking or maintaining consistency across records in replicated databases, synchronizing files in peer-to-peer systems, and repairing or updating  databases across the network. Thus, there is a need for efficient distributed protocols for reconciling character strings, which can, e.g.,
represent files, Web pages, or DNA sequences, avoiding the communication overhead of retransmitting the whole strings.
As a concrete scenario, sequencing labs repeatedly generate genomes, and their clients then download them over the Internet to update genome libraries. In such scenarios, where clients already hold other genomes of the same species, which are known to be very similar to the new ones, clients and servers would like to minimize communication complexities as well as computational overheads. 


In some cases, version control systems can keep track of where the differences are, or both parties can share a common string, so that they can easily exchange only the needed edits to reconcile the strings. The scenarios we have described, however, may lack such shared string or edit information, so the parties {\em know} that the differences between their strings are small, but do not know {\em which} they are. Still, they want to find those differences, and reconcile the strings, within a communication cost that depends only on the amount of differences.


\subsection{Problem Statement}
Let $X = x_1 x_2 \ldots x_l$ and $Y = y_1 y_2 \ldots y_n$ be two strings held by two parties, Alice and Bob, respectively, 
where each $x_i, y_j \in \Sigma$, for $i=1,2,\ldots, l$ and $j=1,2,\ldots,n$, and $\Sigma$ is an alphabet whose size
may or may not depend on~$l$ or~$n$. 
In the case where $l=n$,
the {\em Hamming distance}, $\ham(X,Y)$, between $X$ and~$Y$ is 
the number of indices, $i$, where $x_i\not=y_i$, that is,
the number of character substitutions needed to transform $X$ into~$Y$
\cite{navarro2001guided,crochemore2007algorithms}.
A related notion that does not assume $l=n$
is the \bemph{edit distance}, also
known as {\em Levenshtein distance}, $L(X,Y)$,
between $X$ and~$Y$: it is 
the minimum number character insertions, deletions, or substitutions
needed to transform $X$ into~$Y$ (or vice versa).

In this paper, we are interested in distributed string
reconciliation algorithms with optimal (i.e., linear) computational 
overhead and small communication complexity 
for computing the edit distance between the two strings,
$X$ and $Y$, that are respectively held by Alice and Bob.
{This turns out to be equivalent to the problem of Alice and Bob exchanging their strings $X$ and $Y$ with communication cost based on their communicating a minimal set of edits between $X$ and $Y$.}
We assume that Alice and Bob can communicate over some channel but
they do not share any other computational
resources, such as CPUs or memory.
The \bemph{communication complexity} for a communication protocol for Alice and Bob is the number of bits that are communicated between them over the course of their distributed protocol, where we assume initially that neither party has knowledge of the other party's string.




Our results will hold \bemph{with high probability} (whp), meaning that they occur
with probability $1-1/n^c$, for some constant $c\ge1$. W.l.o.g.\ let us assume $l \le n$.
Our protocols use various independent sources of randomness, e.g., perfect hash functions, block boundary detection, chunk checksums, and IBLT hashing, each of which satisfies the whp guarantee.

\subsection{Our Contributions}
%
We present a simple algorithm that achieves a communication
complexity of {$O(k\log^3 n)$} bits
and $O(n\log k)$ time overhead, where $k$ is 
{the edit distance, 
with a probability of success over $1-1/n^c$ for any constant $c$}.

Our method involves a non-trivial use of the invertible Bloom lookup table (IBLT) data structure~\cite{goodrich2015invertiblebloomlookuptables,eppstein2011whatsthedifference,eppstein2010straggler}. This is an improvement on the IBLT-based technique introduced in our recent paper \cite{GNT26}, with which we share all the introductory sections for self-containedness.

\section{Related Work}

There has been considerable prior work on the general distributed edit distance problem, but we are not aware of 
any previous methods that are simple and communication-efficient while also having low computational overhead, which refers to the total number of local RAM operations performed by Alice and Bob, excluding time spent transmitting bits over the communication channel.
For example, Orlitsky~\cite{orlitsky1991interactive} gives a communication-optimal method that achieves
$O(k\log n)$ communication complexity, where $k$
is an upper bound on the edit distance between
two strings of size $\Theta(n)$, but requires computational overhead that is $n^{O(k)}$, i.e., exponential
in $k$.
The computational overhead for such protocols has been subsequently 
improved~\cite{jowhari2012efficient,chakraborty2016streaming,irmak2005improved,belazzougui2015efficient},
but the best methods still require $O(n\polylog(n))$ overhead and are still fairly complicated.
For instance, Chakraborty, Goldenberg, and Kouck{\`y}~\cite{chakraborty2016streaming} present a protocol 
with communication efficiency of $O(k^2\log n)$ bits and $O(n\polylog(n))$ overhead by embedding strings in 
a Hamming space using random walks and doing their communication in this Hamming space.
Belazzougui and Zhang~\cite{focs} achieve a communication complexity of $O(k(\log^2 k +\log n))$ bits
and $O(n\polylog(n))$ overhead assuming $k<n^{1/c}$ for a large constant $c$, but
their method is quite complex and their bounds have large constant factors.
In general, the previous methods for general distributed edit distance~\cite{jowhari2012efficient,chakraborty2016streaming,irmak2005improved,belazzougui2015efficient,focs} all
require $O(n\polylog(n))$ overhead, or worse, which in practice is unacceptable for the long strings that arise in various real-world settings. 

For specialized string reconciliation,
Agarwal, Chauhan, and Trachtenberg~\cite{puzzles}
present distributed methods for reconciling typical real-world strings. Their method has communication
complexity that scales linearly in the edit distance between the reconciling 
strings, as is also true for our methods.
Their reconstruction methods are based on using masks and shingling
to encode the strings and enumerating Eulerian paths to perform the
reconciliation.
As they admit, however, the number of such paths (and, hence, the
computational burden of such algorithms) can be exponentially large, hence, they do not achieve low overhead.
Like our approach, Kontorovich and Trachtenberg~\cite{Kontorovich}
reduce string reconciliation to 
the set reconciliation problem, but their approach still has suboptimal
performance. 
Song and Trachtenberg~\cite{song} revisit the approach using masks
and shingles with an improved Eulerian-path reconstruction method,
which they show emprically improves over the prior work
by Agarwal, Chauhan, and Trachtenberg~\cite{puzzles}, but the asymptotic
overhead of their method is unfortunately still not efficient.

\section{Invertible Bloom Lookup Table (IBLT)} \label{app:iblt}


IBLT  \cite{goodrich2015invertiblebloomlookuptables,eppstein2010straggler} is a space-efficient probabilistic data structure, based on Bloom filters, that supports various operations implemented by the XOR primitive. 
It can be used to solve the \textit{set reconciliation} problem~\cite{eppstein2011whatsthedifference}, which is the task of synchronizing two sets stored at two nodes across a communication link. IBLTs allow the parties to efficiently identify the items that are unique to each set so that both parties can update their local copies and obtain the other set as well. The main step in this process is to compute a \textit{set difference}, finding the set of elements that one party possesses but the other does not.

Suppose we are given a set $S=\{s_1, s_2, \ldots, s_n\}$, which we want to store probabilistically in an IBLT table, $T$, with $m$ cells.
We assume that we have an \bemph{encoding function}, $\mathsf{encode}$, which encodes each element, $s_i$, into an associated
unique, fixed-length binary string key, $e_i$.
We also have $\lambda$ hash functions that map any key to $\lambda$ distinct locations, and we let $\mathrm{HashToIndices}(e_i,\lambda,m)$ denote
the set of $\lambda$ locations determined by these hash functions for the key $e_i$.
Each IBLT cell, $T[j]$, stores a \texttt{(keysum, hashsum)} pair such that \texttt{keysum} contains the bitwise-XOR 
(denoted $\oplus$) of all encoded keys $e_i$ that are mapped to $T[j]$ by one of $T$'s hash functions; 
\texttt{hashsum} contains the  XOR of all fingerprints or hashes $H(e_i)$, where $H(e_i)$ is a $b$-bit secondary hash of $e_i$, such as a checksum, or cryptographic hash, used to verify or detect errors in the decoding process.

The IBLT supports both insertion and deletion of elements using XOR operations. Inserting an element corresponds to XOR-ing its encoded value into the table (at the $\lambda$ locations), while deleting it corresponds to {\em the same} operations.
%
%
%
To encode an entire set, $S$, we simply insert each of its elements.
%

For a set reconciliation protocol, Alice and Bob build respective tables, $T_A$ and $T_B$, by inserting all the elements of 
their sets, {$A$ and $B$, respectively, into initially zeroed IBLTs of the same size using the same hash functions, and exchange the tables.
From these two IBLTs, $T_A$ and $T_B$, both Alice and Bob can compute an IBLT representing 
the symmetric difference between $A$ and $B$.} This simple method
is shown in Algorithm~\ref{alg:subtract}.
The fact that this algorithm computes an IBLT representation of the symmetric difference of $A$ and $B$ follows
immediately from the fact that $x\oplus x=0$ for any value $x$.

\begin{algorithm}[b]
\caption{IBLT Subtract ($T = T_A \oplus T_B$)}
\begin{algorithmic}[1]
\FOR{$i=0$ to $m-1$}
\STATE $T[i].\texttt{keysum} \gets T_A[i].\texttt{keysum} \oplus T_B[i].\texttt{keysum}$
\STATE $T[i].\texttt{hashsum} \gets T_A[i].\texttt{hashsum} \oplus T_B[i].\texttt{hashsum}$
\ENDFOR
\end{algorithmic}
\label{alg:subtract}
\end{algorithm}

\subsubsection{Listing Set Entries}
If an IBLT is not ``too'' full, we can list out its entries.
The listing of entries or decoding is a destructive $O(m)$-time procedure that recovers all keys in the IBLT or reports
that the IBLT is too full to achieve this successfully.
Say that a cell $T[i]$ is \bemph{pure} if
\[
H(T[i].\texttt{keysum}) = T[i].\texttt{hashsum},
\]
which clearly holds for a cell holding exactly one key but could admittedly also hold if there is a checksum
collision. 
Note that
if a cell is pure, we can recover its key by extracting the value from $T[i].\texttt{keysum}$ and then deleting the element
from the table, which is also referred to as \bemph{peeling} they key from the corresponding $\lambda$ hashed cells. 
This may make other cells pure, which can eventually allow us to recover all the values stored in the IBLT.
Algorithm~\ref{alg:decode} shows this process.

\begin{algorithm}[t]
\caption{IBLT Decode $T$}
\begin{algorithmic}[1]
\WHILE{$\exists~i$, s.t. $H(T[i].\texttt{keysum}) = T[i].\texttt{hashsum}$}
    \STATE $s\gets \mathsf{decode}(T[i].\texttt{keysum})$
    \STATE Output $s$
    \STATE delete$(T,s)$
\ENDWHILE

\FOR{$i=0$ to $m-1$}
    \IF{$T[j].\texttt{keysum} \neq 0$ OR $T[j].\texttt{hashsum} \neq 0$}
    \STATE \textbf{return} FAIL
    \ENDIF
\ENDFOR
\STATE \textbf{return} SUCCESS
\end{algorithmic}
\label{alg:decode}
\end{algorithm}


\begin{lemma} \label{lem:purity1}
    The probability that a cell containing $t \ge 2$ keys produces a false positive on the purity test is $2^{-b}$.
\end{lemma}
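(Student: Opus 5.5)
Model $H$ as a uniformly random function from keys to $\{0,1\}^b$, chosen independently of the set contents; this is the usual idealization for the secondary hash and the only assumption I will need. Fix a cell $T[i]$ holding $t\ge 2$ distinct keys $e_1,\dots,e_t$. Its contents are $K=\bigoplus_{j=1}^t e_j$ and $S=\bigoplus_{j=1}^t H(e_j)$. A false positive occurs exactly when $H(K)=S$, so the task is to show this event has probability $2^{-b}$ over the choice of $H$.

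I will split into two cases according to whether $K$ coincides with one of the stored keys.

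\emph{Case $K\notin\{e_1,\dots,e_t\}$.} Then $H(K)$ is independent of $(H(e_1),\dots,H(e_t))$ and uniform on $\{0,1\}^b$. Condition on the values $H(e_j)$; this fixes $S$. The probability that the independent uniform value $H(K)$ equals the fixed string $S$ is exactly $2^{-b}$. Averaging over the conditioning gives the claim.

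\emph{Case $K=e_{j_0}$ for some $j_0$.} Then $H(K)=H(e_{j_0})$, and the test $H(K)=S$ is equivalent to $\bigoplus_{j\ne j_0} H(e_j)=0$. Since $t\ge 2$, this is an XOR of $t-1\ge 1$ independent uniform $b$-bit strings, and such an XOR is itself uniform. So the test again succeeds with probability exactly $2^{-b}$.

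The main subtlety is this second, degenerate case, together with the requirement that the keys be distinct. If some key appears twice, its contributions cancel under XOR and the cell effectively holds fewer keys. This is consistent with the IBLT semantics: a repeated key nets to zero, so "$t$ keys" means $t$ distinct net keys. A cell with net zero keys would have $K=0$ and $S=0$, so it would pass the test only if $H(0)=0$; this again happens with probability $2^{-b}$ for uniform $H$, or it can be excluded by convention. Once the cases are separated this way, each reduces to the fact that a fixed string is hit by a uniform independent $b$-bit string with probability $2^{-b}$.
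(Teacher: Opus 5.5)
Your proof is correct and uses the same idea as the paper's: $H(K)$ is a uniform $b$-bit value independent of the hashsum, so it matches the hashsum with probability exactly $2^{-b}$. Your second case improves on the paper, which asserts this independence outright and overlooks configurations where $K$ equals a stored key (possible when $t=2$ and the other key is zero, or when $t\ge 4$); you correctly reduce that case to $\bigoplus_{j\ne j_0}H(e_j)=0$, which again has probability exactly $2^{-b}$.
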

\begin{proof} 
    Let $H$ be a uniform random hash function that produces $b$-bit hash values.
    Suppose a cell contains $t \ge 2$ keys, with \texttt{keysum} $=x_1 \oplus x_2 \oplus \ldots \oplus x_t$ and \texttt{hashsum} $=H(x_1) \oplus H(x_2) \oplus \ldots \oplus H(x_t)$. Define a random variable $Z=H(\texttt{keysum})\oplus \texttt{hashsum}=H(x_1 \oplus \ldots \oplus x_t)\oplus H(x_1) \oplus \ldots \oplus H(x_t)$. Because $H$ is uniform and random, $H(x_1 \oplus \dots \oplus x_t)$ is independent of $H(x_1), \dots, H(x_t)$ and each value is uniformly distributed over $\{0,1\}^b$. Thus, the probability of failure is $2^{-b}$.
\end{proof}

\begin{corollary} \label{cor:purity2}
    In an IBLT of $m$ cells, the probability that the purity test yields a false positive for any cell is at most $m 2^{-b}$.
\end{corollary}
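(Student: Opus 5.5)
The plan is to derive Corollary~\ref{cor:purity2} from Lemma~\ref{lem:purity1} by a union bound over the $m$ cells. For each cell index $i \in \{0,\ldots,m-1\}$, let $F_i$ be the event that $T[i]$ passes the purity test, $H(T[i].\texttt{keysum}) = T[i].\texttt{hashsum}$, although it does not hold exactly one key. The event in the statement is then $\bigcup_i F_i$, and
\[
\Pr\Bigl[\bigcup_{i=0}^{m-1} F_i\Bigr] \le \sum_{i=0}^{m-1} \Pr[F_i].
\]
The union bound needs no independence between cells. This matters because cells share keys through the $\lambda$ hash locations, so their events are correlated.

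Next I bound each term $\Pr[F_i] \le 2^{-b}$ by splitting on how many keys the cell holds.
\begin{itemize}
\item If the cell holds $t \ge 2$ keys, Lemma~\ref{lem:purity1} gives exactly $2^{-b}$.
\item If the cell holds exactly one key, the test is correct, so $\Pr[F_i] = 0$.
\item If the cell holds no keys, both sums are $0$. The event then reduces to $H(0)=0$, which has probability $2^{-b}$ for a uniform $H$. Alternatively, one can adopt the natural convention, used in Algorithm~\ref{alg:decode}'s final check, that all-zero cells are treated as empty and never as pure.
\end{itemize}
Summing the $m$ terms then gives $m2^{-b}$.

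The main subtlety is that decoding is dynamic: the purity test is applied repeatedly as keys are peeled, so a cell's contents change over time. I would handle this by observing that the contents of a cell at any moment are determined by the key set and the index hashes, not by $H$. Lemma~\ref{lem:purity1}'s argument therefore applies to each configuration the cell takes.

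If one wants the bound over all tests performed during decoding rather than over cells, the union bound is taken over those test events instead. That gives a bound of the form (number of tests)$\cdot 2^{-b}$. Interpreting the statement per cell, as written, gives $m2^{-b}$ directly.
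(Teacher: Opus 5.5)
Your proposal is correct and matches the paper's proof, which simply applies the union bound over the $m$ cells to the $2^{-b}$ per-cell bound of Lemma~\ref{lem:purity1}. Your handling of the zero-key and one-key cases and of the dynamic peeling process goes beyond the paper's one-line argument and is consistent with it. One small precision: cell contents are independent of $H$ only up to the first false positive, but that is all the union bound over tests needs.
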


\begin{proof} 
    Apply the union bound to the failure probability from Lemma~\ref{lem:purity1}.
\end{proof}

In our setting, we want to recover a symmetric difference of size at most $d$ with high probability, even if $d$ is a constant.
We use an IBLT of size $m = c \cdot d\log n$ for a $c\ge 1$, and $\lambda$
being $\Theta(\log n)$, so that by
an analysis similar to that of Goodrich, Kitagawa, and Mitzenmacher~\cite{goodrich2025parallel}, when the current number of elements in the IBLT is at most $d$, we can successfully list all entries of $T$ whp:

\begin{thm}
\label{thm:whp}
    Let $T$ be an IBLT with  $m=cd\log n$ cells and 
    $\lambda=(c/2)\log n$ hash functions, for $c \ge 4$. 
    If $T$ stores at most $d$ elements, then it can be successfully decoded by the peeling algorithm with 
    probability at least $1-1/n^{c/2-1}$.
\end{thm}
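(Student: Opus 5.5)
The approach is to show that, with high probability, every nonempty subset of the stored elements has at least one cell that contains exactly one of them. The peeling algorithm (Algorithm~\ref{alg:decode}) only gets stuck when the set $S'$ of elements still in the table is nonempty and every cell touched by $S'$ holds at least two elements of $S'$, i.e.\ when $S'$ forms a ``stopping set'' (a 2-core of the hypergraph whose vertices are cells and whose hyperedges are the $\lambda$-sets $\mathrm{HashToIndices}(e_i,\lambda,m)$). So it suffices to bound the probability that some nonempty subset of the at most $d$ stored elements is a stopping set, and then add the false-positive purity probability from Corollary~\ref{cor:purity2}. Since $b$ can be taken to be $\Theta(\log n)$ bits with a large enough constant, that term is negligible. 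I will treat the $\lambda$ hash locations of each key as a uniformly random $\lambda$-subset of the $m$ cells, independent across keys.

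\textbf{Counting stopping sets.} Fix a subset $S'$ with $|S'|=s$, where $1\le s\le d$. For $S'$ to be a stopping set, its $s\lambda$ hash locations must cover at most $s\lambda/2$ distinct cells, since each covered cell is hit at least twice. Union bound over the possible target sets $U$ of $\lfloor s\lambda/2\rfloor$ cells:
\[
\Pr[S' \text{ stuck}] \le \binom{m}{s\lambda/2}\left(\frac{s\lambda/2}{m}\right)^{s\lambda} \le \left(\frac{2em}{s\lambda}\right)^{s\lambda/2}\left(\frac{s\lambda}{2m}\right)^{s\lambda} = \left(\frac{e\,s\lambda}{2m}\right)^{s\lambda/2}.
\]
Now substitute $m=cd\log n$ and $\lambda=(c/2)\log n$. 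This gives $\frac{e s\lambda}{2m}=\frac{e s}{4d}$, so
\[
\Pr[S' \text{ stuck}]\le \left(\frac{es}{4d}\right)^{sc\log n/4}.
\]
Summing over the $\binom{d}{s}\le (ed/s)^s$ choices of $S'$ and over $s$ gives a total failure bound of
\[
\sum_{s=1}^{d}\left(\frac{ed}{s}\right)^s\left(\frac{es}{4d}\right)^{sc\log n/4}.
\]

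\textbf{Main obstacle.} The hard part is the case $s$ close to $d$, where the base $es/(4d)$ approaches $e/4<1$. That is only a constant below $1$, so its exponent $sc\log n/4$ has to do the work, and the $\binom{d}{s}$ factor has to be absorbed. For $s=1$ the bound should be read directly: one key in a stopping set would need all $\lambda$ of its cells to be covered at least twice by itself, which is impossible since its $\lambda$ locations are distinct. So I would start the sum effectively at $s=2$ and bound each term as
\[
\exp\Bigl(s\bigl(1+\ln(d/s)\bigr) - s\,\tfrac{c\log n}{4}\ln\tfrac{4d}{es}\Bigr).
\]
This yields $n^{-\Omega(cs)}$ per $s$, and after summing, a bound of order $n^{-(c/2-1)}$ using $c\ge 4$ and logarithms base~$2$. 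The constants are tight, so I expect to need the sharper count that a stuck cell holds at least two copies, possibly by exposing cells one at a time in the style of Goodrich, Kitagawa, and Mitzenmacher~\cite{goodrich2025parallel}. If the crude binomial bound loses too much at $s\approx d$, I would follow their analysis directly rather than re-derive it.
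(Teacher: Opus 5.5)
\textbf{There is a genuine gap: your union bound does not reach the stated exponent, and you never finish the sum.}

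Reducing to stopping sets is sound. But you stop at ``this yields $n^{-\Omega(cs)}$ per $s$, and after summing, a bound of order $n^{-(c/2-1)}$'' without doing the summation, and for your bound as written that claim is false. Take the smallest nontrivial case, $d=s=2$, where $\lambda/m=1/4$. Your estimate gives
\[
\binom{m}{\lambda}\Bigl(\frac{\lambda}{m}\Bigr)^{2\lambda}\le\Bigl(\frac{e\lambda}{m}\Bigr)^{\lambda}=\Bigl(\frac{e}{4}\Bigr)^{\lambda}=n^{-(c/2)\log_2(4/e)}\approx n^{-0.28c}.
\]
This is larger than the target $n^{-(c/2-1)}$ once $c$ exceeds about $4.5$. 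Using the exact binomial instead of $(em/u)^u$ only improves it to roughly $n^{-0.38c}$, which still fails for $c>8$.

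The loss comes from the event you chose to count. ``All $s\lambda$ locations fit into $s\lambda/2$ cells'' is much weaker than being a stopping set. For $s=2$, a stopping set actually requires both keys to choose the identical $\lambda$-set. Saying you would follow the Goodrich--Kitagawa--Mitzenmacher analysis if the crude bound loses too much leaves exactly the constant-sensitive part of the theorem unproved.

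\textbf{The missing idea: union bound over elements, not subsets.} If peeling gets stuck on a nonempty $S'$, then every element of $S'$ has no \emph{private} cell, meaning a cell that no other stored element hashes to. So it is enough to bound the probability that some element lacks a private cell.
\begin{itemize}
\item Fix one element and condition on the locations of the other at most $d-1$ elements. They occupy a set $O$ of fewer than $d\lambda=m/2$ cells.
\item The probability that all $\lambda$ of the fixed element's cells land in $O$ is at most $\binom{|O|}{\lambda}/\binom{m}{\lambda}\le(|O|/m)^{\lambda}\le 2^{-\lambda}=n^{-c/2}$. This holds even with distinct locations.
\item A union bound over at most $d\le n$ elements gives $n^{-(c/2-1)}$ exactly.
\item Private cells stay private as other keys are peeled, so peeling then recovers everything.
\end{itemize}
This is the paper's argument, and it is both simpler and tight enough for the stated constants.

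Finally, the theorem's bound treats the purity test as exact. If you add the $m2^{-b}$ false-positive term as you propose, your final bound is strictly larger than the one stated.
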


\begin{proof} 
    Let $X_{i}$ be a random variable that is $1$ if the $i$-th element has no pure cell and is $0$ otherwise.
    Since there are at most $d$ elements in $T$ and it has $m=cd\log n$ cells and 
    $\lambda=(c/2)\log n$ hash functions, at least half the cells in the IBLT are empty. Thus, any one of the $\lambda$ hash functions will map an element
    to what would otherwise be an empty cell with probability at least $1/2$.
    Since the $\lambda$ hash functions are independent, this implies that
    \[
    \Pr(X_i=1) \le \frac{1}{2^\lambda} = \frac{1}{n^{c/2}}.
    \]
    Thus, by a union bound, the probability that every element in $T$ has a pure cell is at least
    $1-1/n^{c/2-1}$.
\end{proof}

To sum up, suppose two sets, $A$ and $B$, of size $O(n)$, are held by Alice and Bob respectively, which are assumed to differ by at most $d$ elements. Each party independently constructs an IBLT, $T_A$ and $T_B$ of $m = O(d\log n)$ cells, and inserts all elements into the table. Even if either $T_A$ and $T_B$ is too full to decode individually, there will be pure cells in $T_A \oplus T_B$, i.e., containing only one key. That is, with properly chosen parameters, Alice and Bob can reconcile their sets whp, in $O(n)$ time and $O(d\beta\log n)$ bits of communication, where $\beta>0$ is the size of the keys, by exchanging their tables $T_A$ and $T_B$ and performing the decoding algorithm on $T_A \oplus T_B$. 

\newcommand{\LCP}{\mathrm{LCP}}

\section{Our Algorithm}

\bemph{Locally consistent parsing (LCP)} \cite{coin_tossing,locally_1,MSU97} divides a string, $X=x_1x_2\ldots x_n$, into a sequence, $\mathcal{C}$, of contiguous substrings, called \bemph{chunks},
$C_1$, $C_2$, $\ldots$, $C_r$, so that $X=C_1 || C_2 || \cdots || C_{r}$, where $||$ denotes concatenation.
The version of LCP we use \cite{MSU97,CEKNPtalg20} achieves this by defining a kind of random permutation $\pi$ on the alphabet symbols of $X$, and dividing the string, $X$, by a simple algorithm that ends a chunk wherever a {\em local minimum} occurs. Previously to this chunking, the technique groups {\em runs} of equal symbols. Precisely
\begin{enumerate}
    \item Every maximal sequence of two or more equal consecutive symbols, $x_i x_{i+1} \cdots$ $x_j = a^{j-i+1}$, for $j>i$, is replaced in $X$ by a new symbol $\langle a,j-i+1\rangle$. Let $X' = x_1'x_2'\ldots x_{n'}'$ be the resulting sequence.
    \item A random injective function $\pi$ is defined from the alphabet of $X'$ to integers in $[0,n^{c+2}-1]$ for some constant $c\ge 1$. Crucially, $\pi(\langle a,k\rangle)=\pi(a)$ for any $k$. Also, for technical convenience, we fix $\pi(x'_1)=n^{c+2}-1$ and $\pi(x'_{n'})=0$.
    \item The string, $X'$, is divided into a sequence of chunks: a chunk ends at every position $i$ such that $\pi(x_{i-1}') > \pi(x_i') < \pi(x_{i+1}')$; the last chunk ends at position $n'$.
    \item The resulting sequence of chunks is called $\LCP(X)$.
\end{enumerate}

Note that, since there cannot be two consecutive runs or symbols with the same symbol $c$, it holds $\pi(x_i') \neq \pi(x'_{i+1})$ for every $i$. Further, there cannot be chunks of length $1$, as we show next.

\begin{lemma} \label{lem:length2}
Every chunk is of length at least $2$.
\end{lemma}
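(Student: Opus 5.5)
Chunk boundaries are exactly the positions $i$ of $X'$ that are strict local minima of the sequence $\pi(x'_1),\pi(x'_2),\ldots,\pi(x'_{n'})$, together with the final position $n'$. Chunk lengths are counted in symbols of $X'$; a run symbol $\langle a,k\rangle$ has length $k\ge 2$ in $X$, so lengths in $X$ only grow and bounds on $X'$-lengths carry over. The plan is to show that no two consecutive positions of $X'$ can both be chunk ends, and that position $1$ is not a chunk end. Then between any two consecutive chunk ends, and before the first one, there are at least two positions of $X'$, so every chunk has length at least $2$.

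The key step is a parity-of-neighbors argument. Suppose positions $i$ and $i+1$ are both local-minimum ends, with $1<i$ and $i+1<n'$. Then $\pi(x'_i)<\pi(x'_{i+1})$, because $i$ is a local minimum, and $\pi(x'_{i+1})<\pi(x'_i)$, because $i+1$ is a local minimum. This is a contradiction. The local-minimum condition is well defined only because, as the paper notes, consecutive symbols of $X'$ have distinct $\pi$-values: runs have been collapsed, $\pi(\langle a,k\rangle)=\pi(a)$, and $\pi$ is injective on the underlying alphabet.

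Next, I handle the string ends using the convention $\pi(x'_1)=n^{c+2}-1$ (the maximum value) and $\pi(x'_{n'})=0$ (the minimum value). Position $1$ has no left neighbor and, having maximal $\pi$-value, cannot be a local minimum, so the first chunk ends at some position $\ge 2$. Position $n'-1$ cannot be a local minimum, since that would require $\pi(x'_{n'-1})<\pi(x'_{n'})=0$, which is impossible. Hence the last chunk, ending at $n'$, has length at least $2$ whenever a previous end exists. If no local minimum exists at all, the single chunk is all of $X'$, which has length $n'\ge 2$; this assumes $X'$ has at least two symbols, with the degenerate case $n'=1$ corresponding to $X$ being a single run of length $\ge 2$ or trivially short.

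The main obstacle is the boundary bookkeeping rather than the core argument. One must check that the fixed values $\pi(x'_1)$ and $\pi(x'_{n'})$ can still be injective relative to the other symbols when these symbols recur elsewhere, and one must address the trivial case $n'=1$. I would state the assumption $n'\ge2$ explicitly, or note that a length-$1$ $X'$ means either $X$ is a single run, whose chunk then has $X$-length at least $2$, or $|X|=1$, which is excluded.
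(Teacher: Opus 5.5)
Your proof is correct and follows the same route as the paper: two strict local minima cannot sit at adjacent positions, and the conventions $\pi(x'_1)=n^{c+2}-1$ and $\pi(x'_{n'})=0$ prevent the first and last chunks from having length $1$. Your treatment of the degenerate case $n'=1$ goes beyond what the paper states, though distinctness of neighboring $\pi$-values is not actually needed here, since the strict inequalities alone rule out adjacent minima.
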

\begin{proof}
Follows from the fact that there cannot be two consecutive minima by definition, and that by our choice of $\pi(x_1')$ and $\pi(x_{n'}')$, the first and last chunks cannot be of length $1$ either.
\end{proof}

The next lemma shows, on the other hand, that the chunks are short whp.

\begin{lemma}
With high probability, the longest chunk of $X'$ is of length $O(\log n)$.
\end{lemma}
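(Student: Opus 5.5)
Since a chunk ends at every local minimum of the sequence $\pi(x'_1),\pi(x'_2),\ldots,\pi(x'_{n'})$, the plan is to show that, whp, every window of $\Theta(\log n)$ consecutive positions of $X'$ contains a position $i$ with $\pi(x'_{i-1})>\pi(x'_i)<\pi(x'_{i+1})$. A chunk of length $L$ contains no local minimum strictly inside it, and consecutive $\pi$-values always differ (as noted after the definition). So along a chunk the values must be unimodal: first strictly increasing, then strictly decreasing. Hence it suffices to bound the probability that some window of length $L=\Theta(\log n)$ is monotone on one side of its peak. In particular, such a window contains a strictly monotone run of length at least $L/2$.

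\textbf{Step 1: reduce to distinct symbols.} The difficulty is that symbols of $X'$ may repeat, so the values $\pi(x'_j)$ in a window are not independent. I would argue as follows. In a strictly monotone run all $\pi$-values are distinct. Because $\pi$ is injective on base symbols, the underlying symbols are then pairwise distinct. Here recall that $\pi(\langle a,k\rangle)=\pi(a)$, so runs of the same base symbol give the same value, which cannot occur twice in a strictly monotone run. Consequently, a monotone run of length $\ell$ consists of $\ell$ distinct symbols, and their $\pi$-values are i.i.d.\ uniform over a range of size $n^{c+2}$. The exception is the two fixed endpoint values, which only help create boundaries or cost $O(1)$ positions.

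\textbf{Step 2: bound the probability.} Fix a starting position and a run of $\ell$ distinct symbols. Conditioned on no collisions, the probability that their $\pi$-values appear in one specific monotone order is $1/\ell!$. Collisions among at most $n$ symbols occur with probability at most $n^2/n^{c+2}=n^{-c}$, and I would absorb this term into the whp bound. Union-bounding over at most $n$ starting positions and two directions gives failure probability $2n/\ell!+n^{-c}$. Choosing $\ell=\Theta(\log n)$ with a suitable constant makes $\ell!\ge n^{c+2}$, since $\ell!\ge(\ell/e)^\ell$. In fact $\ell=O(\log n/\log\log n)$ would already suffice. So the total failure probability is $O(n^{-c})$.

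\textbf{Step 3: conclude.} Any chunk of length greater than $2\ell+2$ would contain a strictly monotone run of length greater than $\ell$, which is impossible whp. Hence the longest chunk of $X'$ has length $O(\log n)$.

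The main obstacle I expect is Step 1: making rigorous that repeated symbols cannot spoil independence. The fix is to condition only on the window being strictly monotone, which itself forces its symbols to be distinct. The probability is then computed over an ordering of $\ell$ distinct uniform values. The union bound over windows handles dependence between different windows.
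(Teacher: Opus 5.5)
Your argument is correct and is essentially the paper's: between consecutive local minima the $\pi$-values are unimodal, each strictly monotone phase consists of pairwise distinct symbols and so has independent $\pi$-values, and a union bound over $O(n)$ places gives the whp bound. Your $1/\ell!$ estimate (versus the paper's cruder $2/2^d$) and your union bound over fixed starting positions of $X'$ (rather than over the random chunks) are refinements rather than a different route; the latter also makes the ``conditioning'' unnecessary, since $X'$ is deterministic and a window with repeated symbols simply has probability $0$ of being strictly monotone, and the former yields in passing the sharper $O(\log n/\log\log n)$ bound.
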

\begin{proof}
Consider the distance between two local minima, $i$ and $j$, in $\pi(X')$. The values of $\pi$ in this range must have an increasing phase starting at $i$, of length $\ell^+$, and a decreasing phase ending at $j$, of length $\ell^-$. Since there cannot be repeated symbols $X'[i]$ in the increasing nor in the decreasing phase, the values of $\pi$ within each phase are independent. Therefore, the probability of lengths $\ell^+$ or $\ell^-$ exceeding $d$ is at most $2/2^d$. By the union bound, since there are at most $|X'|/2$ chunks in $X'$, the probability of some chunk exceeding length $2d$ is at most $|X'|/2^d \le n/2^d$. By choosing $d \ge c \log_2 n$, this is at most $1/n^c$.
\end{proof}

We implement $\pi$ as a perfect hash function chosen uniformly at random from a 2-universal family, such as $\mathcal{H} = \{ h_{\alpha,\beta}(x) = 1+[((\alpha x+\beta) \bmod p) \bmod (n^{c+2}-1)],~ \alpha,\beta \in \mathbb{N}, \alpha>0\}$ for a prime $n^{c+2} < p = O(n^{c+2})$ (fixing $\pi(x_1')=0$ and $\pi(x'_{n'})=n^{c+2}-1$). Any random choice $\pi = h_{\alpha,\beta}$ is not perfect with probability $O(n^2/n^{c+2}) = O(1/n^c)$, so we find a suitable $\pi$ in $O(1)$ attempts whp. Note $\pi$ has an $O(\log n)$-bit description. Further, we iterate in the quest for $\pi$ until all the chunks it produces are of length at most $2c \log_2 n$. Overall, we build $\pi$ in $O(n)$ time whp.

The general process involves consecutive {\em rounds} of LCP: calling $X_0 = X$ the original sequence, we apply again LCP on the sequence $X_1 = \LCP(X_0)$, $X_2 = \LCP(X_1)$, and so on until reaching some $X_r$ such that $|X_r|=1$. Precisely:
\begin{enumerate}
    \item We produce $Y_0 = \LCP(X_0)$ as described above, so $Y_0$ is a sequence of chunks. Let $\pi_0$ be the function used for chunking; chunks are of logarithmic length.
    \item We give {\em integer names} to the distinct chunks in $Y_0$. Those names are given with a $\Theta(\log n)$-bit hash function $h_0$ that is perfect for the chunks in $Y_0$, so $N = h_0(C)$ is the name of chunk $C$. 
    \item The string $X_1$ is formed by replacing the chunks in $Y_0$ by their names.
    \item We repeat the process, forming $Y_1 = \LCP(X_1)$ with a new random function $\pi_1$ and hash function $h_1$, and so on.
\end{enumerate}

The names will be recorded, for all the rounds $i$, in sets
\[ A_i = \{ N \to C,~ C \text{ is a chunk of } Y_i \text { assigned name } N = h_i(C) \}. \]  
We now show that all Alice needs to send Bob so he can reconstruct $X$ are, essentially, the sets $A_i$ and the hash functions $\pi_i$ and $h_i$.

\begin{lemma} \label{lem:reconstruct}
Given the sets $A_i$, the hash functions $\pi_i$ and $h_i$, the integer $r$, and the string $X_r$ of length 1, we can reconstruct $X$.
\end{lemma}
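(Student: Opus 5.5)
We argue by downward induction on the round index, showing that from $X_{i+1}$ and $A_i$ we can recover $X_i$, for $i=r-1,r-2,\ldots,0$. The base case is the given string $X_r$ of length $1$. For the inductive step, let $X_{i+1}=N_1N_2\cdots N_s$ be a sequence of names. By construction $X_{i+1}$ was obtained from $Y_i=\LCP(X_i)=C_1C_2\cdots C_s$ by replacing each chunk $C_j$ with $h_i(C_j)$. Every chunk of $Y_i$ appears in $A_i$ as a pair $N\to C$ with $N=h_i(C)$, and $h_i$ is perfect (injective) on the chunks of $Y_i$. Hence each name $N_j$ occurs as the left side of exactly one pair in $A_i$. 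Looking it up gives back $C_j$ uniquely, and concatenating $C_1\|C_2\|\cdots\|C_s$ yields the string that was chunked in round $i$.

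The one subtlety is the run-length step of LCP. The chunks are substrings of $X_i'$, not of $X_i$: their symbols may be run symbols $\langle a,k\rangle$. So after concatenating the chunks we obtain $X_i'$, and we must expand every symbol $\langle a,k\rangle$ back into $a^k$ to obtain $X_i$. This expansion is unambiguous because the pair $\langle a,k\rangle$ records both the symbol and the run length. We should also check that it inverts step~1 exactly. Runs were maximal, so two adjacent symbols of $X_i'$ never both expand to copies of the same $a$ in a way that was not originally a single run. In any case the expansion reproduces $X_i$ literally, whatever the maximality. So we store chunks as strings over the run-extended alphabet and decode names recursively: symbols of $X_{i+1}$ are names, and each chunk of $Y_i$ is a string over the alphabet of $X_i'$.

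Iterating from $i=r-1$ down to $i=0$ gives $X_0=X$. The integer $r$ tells us how many levels to unfold. The hash functions $h_i$ and $\pi_i$ are not strictly needed for decoding, but they let us check consistency: recomputing $\LCP$ with $\pi_i$ on the reconstructed $X_i$ must produce chunk boundaries that agree with $C_1\cdots C_s$. The step we expect to need the most care is making the name lookup well defined. This uses that $A_i$ contains \emph{all} distinct chunks of $Y_i$ and that $h_i$ is perfect on them, so no name maps to two different chunks. Everything else is mechanical inversion of the LCP steps.
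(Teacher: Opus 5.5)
Your proof is correct and matches the paper's argument: downward induction from $X_r$, replacing each name in $X_{i+1}$ by its chunk via $A_i$ and then expanding every run symbol $\langle a,k\rangle$ into $a^k$ to recover $X_i$. Your extra remarks are accurate points the paper leaves implicit: the lookup is well defined because $h_i$ is perfect on the chunks of $Y_i$, and $\pi_i, h_i$ are not actually needed for the decoding itself.
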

\begin{proof}
We proceed by induction, showing that we can reconstruct every $X_i$, from $i=r$ to $i=0$, where we finish because $X_0 = X$. For the base case, $X_r$ is given. Now, assume we have decoded $X_{i+1}$. Using $A_i$, we replace every symbol $X_{i+1}[j] = N$ by the chunk $C$, where $N \to C$ appears in $A_i$. Further, any symbol of the form $\langle c,r\rangle$ in $C$ is rewritten as $r$ copies of $c$. The result is $X_i$.
\end{proof}

Finally, we show that the total size of the sets is linear. 

\begin{lemma}
The total size of all the sets $A_i$ and hash functions $\pi_i$ and $h_i$ is $O(n)$, and there are $r \le \log_2 n$ levels. 
\end{lemma}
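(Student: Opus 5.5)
The plan is to use Lemma~\ref{lem:length2} to show that each round at least halves the length of the current string, and then sum a geometric series. All sizes below are measured in words (entries of $O(\log n)$ bits), as is standard in the RAM model.

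\textbf{Step 1: length halving.} Consider round $i$. First, run compression replaces $X_i$ by $X_i'$ with $|X_i'| \le |X_i|$. By Lemma~\ref{lem:length2}, every chunk of $\LCP(X_i)$ covers at least $2$ symbols of $X_i'$. Hence the number of chunks is at most $|X_i'|/2 \le |X_i|/2$. Since $X_{i+1}$ has one name per chunk, $|X_{i+1}| \le |X_i|/2$, and by induction $|X_i| \le n/2^i$.

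\textbf{Step 2: number of rounds.} While $|X_i| \ge 2$, the string $X_i'$ either has length $1$ (one run) or splits into chunks. In both cases $|X_{i+1}| \le \lfloor |X_i|/2 \rfloor$. So the process reaches length $1$ after at most $\lceil \log_2 n\rceil$ rounds, which gives the claimed $r \le \log_2 n$ up to rounding. The degenerate case where $X_i'$ is a single run should be handled by letting that run form the single chunk.

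\textbf{Step 3: size of the sets.} The set $A_i$ contains at most one entry per distinct chunk of $Y_i$. Each entry $N \to C$ takes space proportional to $|C|$, counted in symbols of $X_i'$, where a run symbol $\langle a,k\rangle$ is one symbol storing the pair $(a,k)$. The chunks partition $X_i'$, so the total size of $A_i$ is at most $|X_i'| + (\text{number of chunks}) = O(|X_i|)$. Summing over rounds gives $\sum_i O(n/2^i) = O(n)$.

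\textbf{Step 4: hash functions.} Each $\pi_i$ and $h_i$ has an $O(\log n)$-bit description, i.e., $O(1)$ words. Over $r \le \log_2 n$ rounds this adds only $O(\log n)$ words, which is within the $O(n)$ bound.

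\textbf{Main obstacle.} The delicate point is Step~1. The run-length step changes the alphabet, so one must check that Lemma~\ref{lem:length2} applies to $X_i'$ at every level, including the boundary conventions $\pi(x_1')$ and $\pi(x_{n'}')$ and the one-run case. Once halving is established, everything else is a geometric sum.
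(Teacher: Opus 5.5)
Your proof is correct. The round count is argued exactly as in the paper: $|X_i'|\le|X_i|$ plus Lemma~\ref{lem:length2} give $|X_{i+1}|\le|X_i|/2$.

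For the size bound you take a different route. You bound each level separately, $|A_i|\le|X_i'|+(\text{number of chunks})=O(|X_i|)=O(n/2^i)$, and then sum a geometric series. The paper instead views the union of the $A_i$ as the rules of a grammar. Its parse tree has the symbols of $X$ as leaves and no unary nodes, hence $O(n)$ nodes, and the sets, which store each distinct name once, are no larger.

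What each buys: the paper's count is global and feeds directly into the grammar viewpoint used later (rules altered per edit in Lemma~\ref{lem:klogn}, grammar size in terms of $\gamma$). Yours reuses the single halving fact for both claims and gives the finer per-level bound $|A_i|=O(n/2^i)$, which is what you need to charge construction time level by level. Your explicit treatment of the single-run case also covers an edge the paper glosses over: there $|X_{i+1}|\le|X_i'|/2$ fails, but $|X_{i+1}|\le|X_i|/2$ still holds.

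One small sharpening: you need not settle for $\lceil\log_2 n\rceil$ ``up to rounding.'' Since $|X_{r-1}|\ge 2$ and $|X_{r-1}|\le n/2^{r-1}$, you get $2^r\le n$, which is exactly $r\le\log_2 n$.
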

\begin{proof}
Collapsing the runs of $X_i$ to form $X_i'$ cannot increase its length, so $|X_i'| \le |X_i|$. Further, Lemma~\ref{lem:length2} implies $|X_{i+1}| \le |X_i'|/2$. Therefore, we obtain $|X_r|=1$ after $r \le \log_2 n$ rounds. There are then $O(\log n)$ functions $\pi_i$ and $h_i$. 

By connecting the names created at level $i$ with the places where they are used in chunks of level $i+1$, the sets $A_i$ can be regarded as a tree with root $X_r[1]$ and leaves forming $X$ (concretely, the union of the sets $A_i$ can be seen as the rules of a context-free grammar that generates $X$). Since there are no unary nodes in this tree, it has $O(n)$ nodes. The sets $A_i$ add up to less because they represent each name only once, thus $\sum_{i=0}^r |A_i| = O(n)$.
\end{proof}

In particular, this implies that the CPU cost to reconstruct the sets is $O(n)$, in addition to the $O(n)$ time needed whp to build suitable hashes.
We now state the key result that allows us using LCP for string reconciliation.

\begin{lemma} \label{lem:klogn}
A single edit in $X$ can alter at most $4\log_2 n$ elements in the corresponding sets $A_i$.
\end{lemma}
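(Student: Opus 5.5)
We prove the statement by induction over the rounds. At each level $i$, an edit confined to a window of $X_i$ changes only $O(1)$ chunks of $Y_i$, and it leaves a window of the same kind in $X_{i+1}$. Summing the affected chunks over the $r\le\log_2 n$ levels gives the bound. The key invariant is stated level by level: $X_i$ and the corresponding $\tilde X_i$ (built from the edited string with the \emph{same} functions $\pi_i,h_i$) agree except on one contiguous ``damaged'' block of positions. The prefixes before the block and the suffixes after it coincide symbol for symbol. The block itself may change length, but its influence region is bounded.

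\textbf{Step 1 (locality of one LCP round).} Whether a chunk boundary falls at position $j$ of $X_i'$ depends only on $\pi_i(x'_{j-1}),\pi_i(x'_j),\pi_i(x'_{j+1})$, so the boundary decision is local with radius $1$. Run collapsing is also local, since an edit can merge or split only the runs adjacent to it. Both functions are fixed and $\pi(\langle a,k\rangle)=\pi(a)$, so a run changing length changes no $\pi$ value. Hence outside the damaged block extended by one symbol on each side, the boundaries of $X_i'$ and $\tilde X_i'$ agree. Every chunk lying entirely in the common prefix or suffix is identical in both parsings. The only chunks that differ are those meeting the damaged block or its neighbouring symbols. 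Because every chunk has length $\ge 2$ (Lemma~\ref{lem:length2}) and boundaries are local, I would argue these form a contiguous run of chunks.

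\textbf{Step 2 (bounding the count).} This is the main obstacle. It is not enough to know the differing chunks are contiguous; we need their number to stay bounded as we go up the levels, rather than growing. I would aim to show the following claim: if the damaged block of $X_i$ consists of at most a constant number of symbols (say one), then in $X_{i+1}$ the damaged block consists of at most one symbol and at most two chunks change per side. The intended argument is that a changed symbol at position $j$ can alter only the boundaries at $j-1,j,j+1$. These boundaries touch at most the chunk containing $j$ and its neighbours, and the neighbouring chunks merge or split with it. I expect this yields at most $2$ old chunks deleted and $2$ new chunks inserted per level, i.e.\ $\le 4$ changes to $A_i$ per level, and a single merged damaged name in $X_{i+1}$. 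If a single damaged symbol per level turns out to be unattainable, I would fall back to a constant-width damaged window. I would then check that the width does not grow, using the length-$\ge 2$ property so that chunks absorbing the window shrink it by half while the radius-$1$ context adds at most a constant.

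\textbf{Step 3 (summing).} The set $A_i$ records each distinct name only once. A changed chunk therefore contributes at most one deletion (the old $N\to C$, if no other occurrence remains) and one insertion (the new $N\to C$). Both counts are bounded by the number of differing chunks. With $\le 4$ altered entries per level and $r\le\log_2 n$ levels by the previous lemma, the total is at most $4\log_2 n$. The top level needs a separate check, since $X_r$ of length $1$ may differ and the number of rounds may itself change by one. I would absorb that into the same per-level count.
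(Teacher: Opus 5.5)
There is a genuine gap in Step~2. Step~1 (locality of boundaries) is fine, but your primary claim in Step~2 is false: a single damaged symbol in $X_i$ does \emph{not} in general give a single damaged name in $X_{i+1}$ with at most two chunks deleted and two inserted. Take $\pi$ respecting lexicographic order and substitute $a$ by $d$ in $cb\underline{a}|bc$. The result parses as $cb|\underline{d}b|c$, so two chunks become three, and $X_{i+1}$ has three changed symbols. Under your Step~3 accounting that is already five set changes at this level. The damage grows once more at the next level: replacing $ba$ by $dcd$ in $dc\underline{ba}|bc$ gives $dc|\underline{dc}|\underline{d}b|c$, so two chunks become four. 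The real content of the lemma is that the damage stabilizes. Your fallback, a constant-width window whose width does not grow, is exactly the statement that needs proof, and you leave it as an intention. Your halving heuristic also does not close on its own: a window of 3 changed symbols can still meet 4 chunks, so the width only stops growing at 4, and this must be checked on both the old and the new side.

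The missing argument is the fixed-point step for the invariant ``at most 4 edits, confined to a window $W$.'' It has two parts.
\begin{enumerate}
\item Chunks disjoint from the asymmetric extended window $x_1Wy_1y_2$ are unchanged. A chunk ending at $x_2$ is determined by $\pi(x_3)\pi(x_2)\pi(x_1)$, and a chunk starting at $y_3$ depends only on the minimality of $y_2$, i.e., on $y_1y_2y_3$. This window has length at most $7$ and chunks have length $\ge 2$, so at most 4 old chunks are touched.
\item After the edit, minimality can change only inside $x_1W'y_1$, which has length at most $6$. It therefore holds at most 3 minima, giving at most 4 new chunks.
\end{enumerate}
Together these show that 4 edits in $X_i$ become at most 4 chunk edits in $X_{i+1}$. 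You must redo this when insertions or deletions change $|W|$. You must also handle runs, where one edit can split $\langle a,k\rangle$ into $\langle a,k_1\rangle W'\langle a,k_2\rangle$ or fuse two runs (deleting $b$ in $\langle a,2\rangle b\langle a,2\rangle$), changing $X_i'$ in more than one position.

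Finally, your Step~3 accounting (one deletion plus one insertion per changed chunk), applied to the correct per-level bound of 4 old chunks replaced by 4 new ones, yields up to $8\log_2 n$ set changes. The constant $4$ in the statement comes from counting chunk edits, i.e., rules inserted, deleted, or modified, with a modified rule counted once.
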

\begin{proof}
We prove in Appendix~\ref{app:maxalt} that an edit operation in $X_0$ cannot propagate to more than 4 chunk edits (i.e., chunks inserted, deleted, or modified) in any of the sequences $X_i$. Therefore, no more than $4r \le 4\log_2 n$ rules can be altered.
\end{proof}

Let us assume that we have an upper-bound estimate, $k\ge1$, 
for the edit distance between the strings, $X$ and $Y$, held respectively by Alice and Bob (we later lift this assumption). 
Let us focus on Bob obtaining the string $X$ from Alice; the process for Alice obtaining Bob's $Y$ is analogous.

We first obtain common functions $\pi_i$ and $h_i$ for Alice and Bob. This can be achieved asymmetrically, as follows: Alice generates suitable functions and sends them to Bob, using only $O(\log^2 n)$ bits of communication. Bob checks that the functions are suitable for the construction of his own sets $B_i$ from $Y$.\footnote{To make them compatible, we take $n=\max(|X|,|Y|)$ when building the functions, and the domain of the injective functions is sufficient to map any integer name.} If not, he requests a new set of functions to Alice. Only $O(1)$ iterations suffice whp. Note that, because these functions are shared, equal chunks $C$ will receive equal names $N = h_i(C)$ and functions $\pi_i$ produce the same cuts, so Lemma~\ref{lem:klogn} implies that there will be at most $4k$ differing elements between the sets $A_i$ and $B_i$. 

Alice creates an IBLT of appropriate size to allow Bob obtaining the elements of each set $A_i$ with probability at least $1-1/\eta$, for
our desired confidence parameter of successful decoding, $\eta\le n^c$. 
Since there will be at most $4k$ differing elements between $A_i$ and $B_i$, Alice builds IBLTs of
$O(k\log \eta)$ cells, with each cell storing a \texttt{keysum} and \texttt{hashsum} of $\Theta(\log n)$ bits.
Alice then transmits $r$, $X_r$, and $r$ IBLTs encoding her sets $A_i$, from $A_{r-1}$ to $A_0$, in rounds. 

Bob will proceed by rounds, obtaining the strings $X_i$ from $X_r$ to $X_0=X$, by following Lemma~\ref{lem:reconstruct}. The string $X_r$ is sent explicitly by Alice, which gives the base of the induction. In general, Bob has already decoded $X_{i+1}$ and wants to decode $X_i$ from $X_{i+1}$ and $A_i$. Since there are at most $4k$ chunk contents $C$ that differ between $A_i$ and $B_i$, the IBLT will allow Bob recovering $A_i$ from $B_i$ whp. Precisely, Bob will identify, whp, the at most $4k$ elements in $A_i \setminus B_i$, which he will request from Alice; she will send him back the explicit chunks. Each such chunk requires $O(\log^2 n)$ bits of communication. With $X_{i+1}$ and $A_i$, Bob will then reconstruct $X_i$ using Lemma~\ref{lem:reconstruct}. The total failure probability is $O((1/\eta)\log n)$, so we still succeed whp if $\eta > n\log n$.

The $r$ IBLTs add up to $O(k\log n \log \eta)$ cells, and each cell requires $O(\log n)$ bits because the chunks are of constant size and the alphabet and names requires $O(\log n)$ bits each. On the other hand, the exchanged chunks add up to $O(k \log n)$, and each requires $O(\log^2 n)$ bits of communication.
Therefore, we have the following result, which yields $O(k\log^3 n)$-bit communication complexity in order to succeed whp.

\begin{thm}
\label{thm:prob}
    Let $k$ be an {a-priori-known} upper bound on
    the edit distance between two strings, $X$ and $Y$, of length $\Theta(n)$ and defined over an alphabet of size $O(n)$,
held by two separate parties, Alice and Bob. 
    Then our IBLT-chunking algorithm allows Alice and Bob to determine the 
other party's string in $O(n)$ time 
    using 
    \[ O(k\log^2 n (\log n + \log \eta)) ~=~ O(k\log^3 n)\]
bits of communication, all with probability at least $1-1/\eta$ for any $n\log n < \eta \le n^c$ for any desired constant $c > 1$.
\end{thm}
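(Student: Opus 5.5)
The proof assembles the protocol described just before the statement and accounts separately for correctness, communication, time, and failure probability. We argue for Bob recovering $X$; Alice recovering $Y$ is symmetric and only doubles the costs.

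\emph{Correctness.} Alice picks the functions $\pi_i$ and $h_i$, using $n=\max(|X|,|Y|)$, and sends them to Bob. Bob checks that they are perfect on his own chunks and that they yield chunks of length at most $2c\log_2 n$; if not, he asks for fresh functions. By the collision and chunk-length bounds, each attempt fails with probability $O(1/n^c)$, so $O(1)$ rounds of $O(\log^2 n)$ bits suffice whp.

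With shared functions, equal chunks get equal names and equal cuts. Since $k$ edits turn $Y$ into $X$, we can apply Lemma~\ref{lem:klogn} one edit at a time through the intermediate strings. Per level, this gives $|A_i \triangle B_i| \le 4k$, which is what the appendix argument shows level by level.

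Alice then sends $r$, $X_r$, and one IBLT per level. Each IBLT has $m=c'\cdot 4k\log\eta$ cells and $\Theta(\log\eta)$ hash functions, with keys being encodings of $N\to C$. Bob subtracts the IBLT of his own $B_i$, as in Algorithm~\ref{alg:subtract}. By Theorem~\ref{thm:whp}, rephrased with $\log\eta$ in place of $\log n$, peeling succeeds with probability $1-O(1/\eta)$. By Corollary~\ref{cor:purity2}, purity false positives cost at most $m2^{-b}$, which is $\le 1/\eta$ once the hashsum has $b=\Theta(\log n+\log\eta)$ bits.

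Peeling tells Bob which names in $A_i\setminus B_i$ he lacks. He requests those chunks and then rebuilds $X_i$ from $X_{i+1}$ via Lemma~\ref{lem:reconstruct}. By induction from $i=r$ down to $0$, he obtains $X$.

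\emph{Communication.} There are $r\le\log_2 n$ levels, and each IBLT has $O(k\log\eta)$ cells. A key $N\to C$ is a name plus a chunk of $O(\log n)$ symbols, each of $O(\log n)$ bits. One could instead store only a fingerprint of the element, but sending the full key costs $O(\log^2 n)$ bits per cell. This gives $O(k\log^3 n\log\eta)$ in total, which is too much.

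To meet the bound, I would key each IBLT only by the name $N$, which has $\Theta(\log n)$ bits; the name is a perfect hash of the chunk, so it identifies $C$. Each cell then costs $O(\log n+\log\eta)$ bits, and the IBLTs total $O(k\log n\log\eta(\log n+\log\eta)/\log\eta)$; I will recheck this so it fits within $O(k\log^2 n(\log n+\log\eta))$. The explicitly requested chunks are at most $4k$ per level, each of $O(\log^2 n)$ bits, giving $O(k\log^3 n)$. Together with the $O(\log^2 n)$ bits for the functions, the total is $O(k\log^2 n(\log n+\log\eta))$, which is $O(k\log^3 n)$ since $\eta\le n^c$.

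\emph{Time and failure probability.} Building all levels, the sets, and the IBLT insertions is linear, because $\sum_i|A_i|=O(n)$. Each element goes into $\Theta(\log\eta)$ cells, though, which threatens the $O(n)$ bound. This is the step I expect to be the main obstacle. I would handle it by inserting only elements with fresh names per level, or by arguing that the per-level insertion costs form a geometric sum since $|X_{i+1}|\le|X_i|/2$. If neither works, I would settle for $O(n\log\eta)$ and reduce $\lambda$ to a constant via standard $O(k)$-cell IBLT analysis, amplified by $\log\eta$ independent repetitions.

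Finally, a union bound over the $r$ levels, the function generation, and the purity checks gives total failure probability $O(\log n/\eta+n^{-c})$. Rescaling constants brings this below $1/\eta$ in the stated range $n\log n<\eta\le n^c$.
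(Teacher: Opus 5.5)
Your protocol and its accounting are essentially the paper's. You have shared $\pi_i,h_i$ sent by Alice and checked by Bob, an $O(k)$ per-level difference from Lemma~\ref{lem:klogn}, and one IBLT of $O(k\log\eta)$ cells per level keyed by $\Theta(\log n)$-bit names. The missing chunks are sent explicitly at $O(\log^2 n)$ bits each, and a union bound runs over the $r\le\log_2 n$ levels. Your rescaling to reach $1/\eta$ is fine: it is tidier than the paper's $O(\log n/\eta)$, and sizing each table for failure $1/(\eta\log n)$ still needs only $O(k\log\eta)$ cells. Two small repairs are needed:
\begin{itemize}
\item The factor ``$/\log\eta$'' in your IBLT total is spurious. $r$ tables of $O(k\log\eta)$ cells, each of $O(\log n+\log\eta)$ bits, give $O(k\log n\log\eta(\log n+\log\eta))=O(k\log^2 n\log\eta)$, because $\log\eta\le c\log n$. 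This is the paper's count and it fits the bound.
\item A chunk substitution can both remove and create a set element. What is at most $4k$ is each of $|A_i\setminus B_i|$ and $|B_i\setminus A_i|$, so $|A_i\triangle B_i|\le 8k$. This is harmless asymptotically.
\end{itemize}

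The genuine gap is the $O(n)$ time bound, which your proposal does not establish. Insertion costs $\lambda\sum_i|A_i|$ with $\lambda=\Theta(\log\eta)$, and $|A_0|$ alone can be $\Theta(n)$. For example, if all symbols of $X$ are distinct, the chunks are pairwise distinct and a random $\pi$ creates about $n/3$ of them. None of your three remedies helps:
\begin{itemize}
\item Inserting only fresh names does not shrink level $0$.
\item The geometric decay of $|X_i|$ bounds the number of insertions, which is already $O(n)$; it does not reduce the per-insertion factor $\lambda$.
\item $\log\eta$ independent constant-$\lambda$ tables again put every element into $\Theta(\log\eta)$ cells.
\end{itemize}
So as written you get $O(n\log\eta)=O(n\log n)$ time. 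Note that the paper's own argument does not resolve this either. It infers linear time from $\sum_i|A_i|=O(n)$ and the hash construction, implicitly charging $O(1)$ per IBLT insertion, even though Theorem~\ref{thm:whp} uses $\Theta(\log n)$ hash functions.

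One way to genuinely get $O(n)$ time is to reconcile bucket fingerprints instead of names:
\begin{itemize}
\item At level $i$, hash the names into $M_i=\lceil n/(2^i\log n)\rceil$ buckets. Since $|A_i|,|B_i|\le n/2^{i+1}$, each bucket holds $O(\log n)$ names whp.
\item Compute one $\Theta(\log n)$-bit XOR fingerprint per bucket, in $O(|A_i|)$ total time.
\item Reconcile the $M_i$ pairs (bucket, fingerprint) with the $\Theta(\log\eta)$-hash IBLT. This costs $O(M_i\log\eta)=O(n/2^i+\log n)$ per level, hence $O(n)$ overall.
\item At most $O(k)$ buckets differ. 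Bob requests them, and Alice sends their full name lists.
\end{itemize}
This adds only $O(k\log^2 n)$ bits per level, so the total stays within $O(k\log^2 n(\log n+\log\eta))$.
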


 {Once both Alice and Bob have the chunks of both strings, they can compute their edit distance in $O(n+k^2)$ additional time \cite{landau1998incremental}.} 
Thus, when $k = O(\sqrt{n})$, 
we can also compute the precise edit distance within this cost.

To conclude, we note that we can obtain a similar result {\em without} knowing an \textit{a priori} bound $k$ on the edit distance, by slightly increasing the time overhead.

\begin{corollary}
    {The same result of Theorem~\ref{thm:prob}  can be obtained if $k$ is the actual edit distance between $X$ and $Y$, a priori unknown, with $O(n\log k)$ time overhead.}
\end{corollary}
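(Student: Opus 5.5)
We plan to use the standard doubling (exponential search) technique on the unknown parameter $k$. Alice and Bob run the protocol of Theorem~\ref{thm:prob} with guesses $k_j = 2^j$ for $j = 0,1,2,\ldots$. They stop at the first guess for which Bob's decoding succeeds.

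The key observation is that the expensive, $k$-independent work is done only once. This work consists of choosing the functions $\pi_i$ and $h_i$, running the LCP rounds, and building the name sets $A_i$ and $B_i$. It takes $O(n)$ time whp and $O(\log^2 n)$ bits, and it can be reused across all guesses. For each guess $k_j$, only the IBLTs of $O(k_j \log \eta)$ cells per level need to be rebuilt. Each rebuild inserts all $O(n)$ elements of the sets $A_i$ with $\lambda = \Theta(\log \eta)$ hashes. We would charge $O(n)$ time per guess by letting $\lambda$ grow with $k_j$ as needed, or by amortizing the insertion cost. Since the process stops at $j = \lceil \log_2 k \rceil + O(1)$, the total time is $O(n \log k)$.

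For communication, the IBLT sizes form a geometric series. The total is therefore $\sum_{j \le \log k + 1} O(2^j \log^2 n(\log n + \log\eta)) = O(k\log^2 n(\log n + \log\eta))$, which matches Theorem~\ref{thm:prob}. The exchanged chunks for the successful guess are counted there as well.

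Failure detection is what makes the stopping rule work. If $k_j$ is too small, then at some level Bob's peeling either returns FAIL, by the final nonzero-cell check in Algorithm~\ref{alg:decode}, or the reconstructed $X_i$ is wrong. To catch the latter, Alice additionally sends a $\Theta(\log \eta)$-bit fingerprint of $X$ (and of each $X_i$). Bob then verifies the reconstruction and requests the next guess on mismatch.

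The remaining steps are as follows.
\begin{enumerate}
\item A false acceptance requires a fingerprint collision or a purity false positive. By Corollary~\ref{cor:purity2} and the union bound over $O(\log k \cdot \log n)$ IBLT decodings, this happens with probability $O(1/\eta)$ after adjusting constants.
\item Once $k_j \ge k$, Lemma~\ref{lem:klogn} gives at most $4k_j$ differences per level. Theorem~\ref{thm:whp} then gives success whp.
\end{enumerate}

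I expect the main obstacle to be the time accounting. Doubling naively gives $O(n\log k)$ only if each IBLT rebuild costs $O(n)$ rather than $O(n\log\eta)$. I would first try to argue this by choosing $\lambda$ as a constant plus a bounded number of high-confidence levels. Failing that, I would fix IBLT hashes so that each element's cell indices for size $2^{j+1}$ derive in $O(1)$ time from those for $2^j$, for example by taking one more bit of a fixed hash. This would give $O(n)$ work per guess, summing to $O(n\log k)$.
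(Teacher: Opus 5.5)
Your proposal is correct and follows the paper's proof. Both run Theorem~\ref{thm:prob} with $k'=1,2,4,\ldots$, bound communication by a geometric sum, and pay one run's time per guess over $O(\log k)$ guesses. Two differences are worth noting.

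First, the time obstacle you flag does not arise in the paper. It simply charges each run the $O(n)$ time stated in Theorem~\ref{thm:prob}, so you can cite that bound directly. Your proposed index-derivation tricks would not remove the $\Theta(\lambda)$ cell updates per inserted element anyway.

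Second, your fingerprint-based success test makes explicit a step the paper leaves unstated: the paper only says to repeat ``until it succeeds'' without saying how success is detected.
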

\begin{proof}
{Apply Theorem~\ref{thm:prob} with upper bound $k'=1,2,4,8,\ldots$ until it succeeds. With probability at least $1-1/\eta$, the try with $k \le k' < 2k$ will succeed, and each party will obtain the other's string. The total communication cost is as in Theorem~\ref{thm:prob}, replacing $k$ by $\sum_{j=0}^{\lceil \log_2 k\rceil} 2^j = O(k)$, and the time overhead is $O(n\log k)$ because we process the chunks $O(\log k)$ times, once per value of $k'$.}
\end{proof}

\section{Conclusions}

We have introduced a new IBLT-based communication efficient algorithm for string reconciliation between two parties that separately store two strings of size $\Theta(n)$, improving our recent result also based on IBLT \cite{GNT26} by exchanging only $O(k\log^3 n)$ bits whp.

Our technique is based on generating a randomized grammar that outputs the string, and exchanging the grammars. We show that each edit alters only $O(\log n)$ nonterminals in the grammar, if a specific grammar construction \cite{CEKNPtalg20} is used. Our technique has the additional benefit that the grammar it uses is guaranteed to generate a representation is of size $O(\gamma\log(n/\gamma))$, where $\gamma$ is the size of the smallest attractor \cite{KP18} of the string to exchange. For example, $\gamma$ lower-bounds the size of the Lempel-Ziv parse \cite{LZ76} of the string. If the texts Alice and Bob have to exchange are very repetitive (e.g., large collections of genomes of the same species, as in genomic databases), their $\gamma$ value is much less than their collection size $n$ \cite{Nav20}. In such cases, they may have the texts directly compressed in grammar form, and can reconcile their databases without spending time or space proportional to $n$, but to $\gamma$. By using a slightly more complex grammar \cite{KNOalgo23}, we can obtain the same result with a grammar of size $O(\delta\log(n/\delta))$, where $\delta \le \gamma$ is the normalized substring complexity of the strings \cite{KNPtit22}.

\bibliographystyle{IEEEtranS}
\bibliography{ref}

@article{goodrich2015invertiblebloomlookuptables,
      title={Invertible {Bloom} Lookup Tables}, 
      author={Michael T. Goodrich and Michael Mitzenmacher},
      year={2015},
      volume={1101.2245},
      journal = "CoRR",
      !archivePrefix={arXiv},
      !primaryClass={cs.DS},
      !url={https://arxiv.org/abs/1101.2245}, 
}

@ARTICLE {
        Nav20,
        TITLE = "Indexing Highly Repetitive String Collections, 
		 Part {I}: Repetitiveness Measures",
        AUTHOR = "Gonzalo Navarro",
        JOURNAL = {ACM Computing Surveys},
        YEAR = 2021,
        VOLUME = 54,
        NUMBER = 2,
        PAGES = "article 29"
        }

@article{eppstein2011whatsthedifference,
author = {Eppstein, David and Goodrich, Michael T. and Uyeda, Frank and Varghese, George},
title = {What's the Difference? {E}fficient Set Reconciliation Without Prior Context},
year = {2011},
_publisher = {ACM},
_address = {New York, NY, USA},
volume = {41},
number = {4},
issn = {0146-4833},
_url = {https://doi.org/10.1145/2043164.2018462},
_doi = {10.1145/2043164.2018462},
abstract = {We describe a synopsis structure, the Difference Digest, that allows two nodes to compute the elements belonging to the set difference in a single round with communication overhead proportional to the size of the difference times the logarithm of the keyspace. While set reconciliation can be done efficiently using logs, logs require overhead for every update and scale poorly when multiple users are to be reconciled. By contrast, our abstraction assumes no prior context and is useful in networking and distributed systems applications such as trading blocks in a peer-to-peer network, and synchronizing link-state databases after a partition.Our basic set-reconciliation method has a similarity with the peeling algorithm used in Tornado codes [6], which is not surprising, as there is an intimate connection between set difference and coding. Beyond set reconciliation, an essential component in our Difference Digest is a new estimator for the size of the set difference that outperforms min-wise sketches [3] for small set differences.Our experiments show that the Difference Digest is more efficient than prior approaches such as Approximate Reconciliation Trees [5] and Characteristic Polynomial Interpolation [17]. We use Difference Digests to implement a generic KeyDiff service in Linux that runs over TCP and returns the sets of keys that differ between machines.},
journal = {SIGCOMM Computer Communication Review},
month = aug,
pages = {218–229},
numpages = {12},
keywords = {set difference, invertible bloom filter, difference digest}
}

@ARTICLE{puzzles,
  author={Agarwal, S. and Chauhan, V. and Trachtenberg, A.},
  journal={IEEE Transactions on Parallel and Distributed Systems}, 
  title={Bandwidth Efficient String Reconciliation Using Puzzles}, 
  year={2006},
  volume={17},
  number={11},
  pages={1217-1225},
  doi={10.1109/TPDS.2006.148}}

@INPROCEEDINGS{Kontorovich,
  author={Kontorovich, Aryeh and Trachtenberg, Ari},
  booktitle={Proc. IEEE Int. Symp. on Information Theory}, 
  title={String Reconciliation with Unknown Edit Distance}, 
  year={2012},
  volume={},
  number={},
  pages={2751-2755},
  doi={10.1109/ISIT.2012.6284024}}

@INPROCEEDINGS{song,
  author={Song, Bowen and Trachtenberg, Ari},
  booktitle={Proc. 57th Allerton Conf. on Communication, Control, and Computing (Allerton)}, 
  title={Scalable String Reconciliation by Recursive Content-Dependent Shingling}, 
  year={2019},
  volume={},
  number={},
  pages={623-630},
  doi={10.1109/ALLERTON.2019.8919901}}

@article{navarro2001guided,
  title={A Guided Tour to Approximate String Matching},
  author={Navarro, Gonzalo},
  journal={ACM Computing Surveys},
  volume={33},
  number={1},
  pages={31--88},
  year={2001},
  publisher={ACM},
  address={New York, NY, USA}
}

@book{crochemore2007algorithms,
  title={Algorithms on Strings},
  author={Crochemore, Maxime and Hancart, Christophe and Lecroq, Thierry},
  year={2007},
  publisher={Cambridge University Press}
}

@ARTICLE 
        { CEKNPtalg20,
          AUTHOR = "A. R. Christiansen and M. B. Ettienne and T. Kociumaka
		    and G. Navarro and N. Prezza",
	  TITLE = "Optimal-Time Dictionary-Compressed Indexes",
	  JOURNAL = "ACM Transactions on Algorithms",
          YEAR = 2020,
	  VOLUME = 17,
	  NUMBER = 1,
	  PAGES = "article 8"
        }

@ARTICLE
        { KNOalgo23,
	  TITLE = "Near-Optimal Search Time in $\delta$-Optimal Space, and Vice Versa",
          AUTHOR = "T. Kociumaka and G. Navarro and F. Olivares",
	  JOURNAL = "Algorithmica",
	  VOLUME = 86,
	  PAGES = "1031--1056",
	  YEAR = 2024
	}

@ARTICLE
        { KNPtit22,
	  TITLE = "Toward a Definitive Compressibility Measure
		   for Repetitive Sequences",
          AUTHOR = "T. Kociumaka and G. Navarro and N. Prezza",
	  JOURNAL = "IEEE Transactions on Information Theory",
	  YEAR = 2023,
	  VOLUME = 69,
	  NUMBER = 4,
	  PAGES = "2074--2092"
	}

@inproceedings { KP18,
        author = "D. Kempa and N. Prezza",
        title = "At the Roots of Dictionary Compression: String Attractors",
        booktitle = "Proc. 50th Annual ACM Symposium on the Theory of
                        Computing (STOC)",
        !booktitle = "Proc. 50th STOC",
        year = 2018,
        pages = "827--840" 
        }

@ARTICLE{LZ76,
        author = {A. Lempel and J. Ziv},
        journal = {IEEE Transactions on Information Theory},
        number = {1},
        pages = {75--81},
        title = {On the Complexity of Finite Sequences},
        volume = {22},
        year = {1976}
}

@Article{MSU97,
  author    = {Kurt Mehlhorn and R. Sundar and Christian Uhrig},
  title     = {Maintaining Dynamic Sequences under Equality Tests in Polylogarithmic Time}, 
  journal   = {Algorithmica},
  year      = {1997},
  volume    = {17},
  number    = {2},
  pages     = {183--198},
  !bibsource = {dblp computer science bibliography, https://dblp.org},
  !doi       = {10.1007/BF02522825},
  !groups    = {2018_edandpartition, thesis, 2019_index}, 
}

@InProceedings{coin_tossing,
  author     = {Cole, R and Vishkin, U},
  booktitle  = {18th Annual ACM Symposium on Theory of Computing (STOC)},
  title      = {Deterministic Coin Tossing and Accelerating Cascades: Micro and Macro Techniques for Designing Parallel Algorithms},
  year       = {1986},
  pages      = {206--219},
  _address   = {New York, NY, USA},
  _location  = {Berkeley, California, USA},
  _numpages  = {14},
  _publisher = {Association for Computing Machinery},
  _series    = {STOC '86},
  doi        = {10.1145/12130.12151},
  isbn       = {0897911938},
}

@InProceedings{locally_1,
  author     = {Tuğkan Batu and S{\"{u}}leyman Cenk Sahinalp},
  booktitle  = {9th International Conference on Developments in Language Theory},
  title      = {Locally Consistent Parsing and Applications to Approximate String Comparisons},
  year       = {2005},
  pages      = {22--35},
  !series     = {LNCS},
  !volume     = {3572},
  _editor    = {Clelia de Felice and Antonio Restivo},
  _publisher = {Springer}, 
  !doi        = {10.1007/11505877_3},
  isbn       = {978-3-540-31682-4},
}

@article{landau1998incremental,
  title={Incremental String Comparison},
  author={Landau, Gad M and Myers, Eugene W and Schmidt, Jeanette P},
  journal={SIAM Journal on Computing},
  volume={27},
  number={2},
  pages={557--582},
  year={1998},
  publisher={Society for Industrial and Applied Mathematics}
}

@INPROCEEDINGS{focs,
  author={Belazzougui, Djamal and Zhang, Qin},
  booktitle={Proc. IEEE Symp. on Foundations of Computer Science (FOCS)}, 
  title={Edit Distance: Sketching, Streaming, and Document Exchange}, 
  year={2016},
  pages={51-60},
  doi={10.1109/FOCS.2016.15}}

@inproceedings{orlitsky1991interactive,
  title={Interactive Communication: Balanced Distributions, Correlated Files, and Average-Case Complexity},
  author={Orlitsky, Alon},
  booktitle={Proc. IEEE Symp. on Foundations of Computer Science (FOCS)}, 
  pages={228--238},
  year={1991}
}

@inproceedings{jowhari2012efficient,
  title={Efficient Communication Protocols for Deciding Edit Distance},
  author={Jowhari, Hossein},
  booktitle={Proc. European Symp. on Algorithms (ESA)},
  pages={648--658},
  year={2012},
  _organization={Springer}
}

@inproceedings{chakraborty2016streaming,
  title={Streaming Algorithms for Embedding and Computing Edit Distance in the Low Distance Regime},
  author={Chakraborty, Diptarka and Goldenberg, Elazar and Kouck{\`y}, Michal},
  booktitle={Proc. ACM Symp. on Theory of Computing (STOC)},
  pages={712--725},
  year={2016}
}

@inproceedings{irmak2005improved,
  title={Improved Single-Round Protocols for Remote File Synchronization},
  author={Irmak, Utku and Mihaylov, Svilen and Suel, Torsten},
  booktitle={Proc. 24th Joint Conf. of the IEEE Computer and Communications Societies},
  volume={3},
  pages={1665--1676},
  year={2005},
}

@article{belazzougui2015efficient,
  title={Efficient Deterministic Single Round Document Exchange for Edit Distance},
  author={Belazzougui, Djamal},
  journal={CoRR},
  volume = {1511.09229},
  year={2015}
}

@article{eppstein2010straggler,
  title={Straggler identification in round-trip data streams via Newton's identities and invertible {Bloom} filters},
  author={Eppstein, David and Goodrich, Michael T},
  journal={IEEE Transactions on Knowledge and Data Engineering},
  volume={23},
  number={2},
  pages={297--306},
  year={2010},
  publisher={IEEE}
}

@inproceedings{goodrich2025parallel,
  title={Parallel Peeling of Invertible {Bloom} Lookup Tables in a Constant Number of Rounds},
  author={Goodrich, Michael T and Kitagawa, Ryuto and Mitzenmacher, Michael},
  booktitle={Proc. Int. Conf. on Current Trends in Theory and Practice of Computer Science},
  pages={70--84},
  year={2025},
  !organization={Springer}
}

@article { GNT26,
    title = "Simple Low-Overhead Communication-Efficient
String Reconciliation and Edit Distance",
    author = "M. T. Goordrich and G. Navarro and C. A. To",
    journal = "CoRR",
    volume = "2608.19149",
    year = 2026}

\appendix

\section{Maximum Number of Affected Chunks} \label{app:maxalt}

We prove that a single edit in the base sequence $X_0$ (i.e., an insertion, a deletion, or a substitution of a symbol) cannot induce more than $4$ edits in any subsequent sequence $X_i$. First, we note that this is tight: if we assume that $\pi$ respects lexicographic order between symbols, then substituting, within $X_0$, $a$ by $d$ in $cb\underline{a}|bc$ produces $cb|\underline{d}b|c$ (we underline the changed symbols and indicate chunk limits with a vertical bar). This changes two chunks to three, which amounts to two chunk substitutions and one chunk insertion, that is, $3$ edits in $X_1$. Once we have $3$ consecutive edits, we can produce $4$: consider $dc\underline{ba}|bc$ and replace $ba$ by $dcd$ (with two substitutions and one insertion). This yields $dc|\underline{dc}|\underline{d}b|c$, converting two chunks into four  (i.e., two insertions and two substitutions). We now show that four is the limit: 4 edits in $X_i$ produce at most 4 edits in $X_{i+1}$.

Let us first disregard the possibility of runs, and consider 4 substitutions. We say that these affect a {\em window} $W$ of 4 symbols, which becomes a new window $W'$, also of 4 symbols. We define the area of influence of those edits as follows.

\begin{definition}
Let the edits in a string apply on a window $W$. The corresponding {\em extended window} is the context $x_1 W y_1 y_2$ formed by extending $W$ one symbol to the left and two to the right.    
\end{definition}

Note that changes in $W$ can affect chunks it does not intersect, as in $cb|\underline{cdcb}$ becoming $cb\underline{a}|\underline{bcd}$ if we replace $W=cdcb$ by $W'=abcd$, but they cannot affect chunks out of the extended window, because those chunks are fully defined by symbols out of $W$, as proved next.

\begin{lemma} \label{lem:sourceW}
Chunks not intersecting the extended window of $W$ cannot be altered by the edits in $W$.
\end{lemma}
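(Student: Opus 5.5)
The plan is to show that whether a position $j$ of $X'$ is a chunk end is decided by a constant-size neighbourhood of $j$, and that a chunk is fixed once its two delimiting cuts are fixed. By step~3 of the LCP construction, position $j$ ends a chunk exactly when $\pi(x'_{j-1}) > \pi(x'_j) < \pi(x'_{j+1})$. This predicate reads only the three symbols $x'_{j-1}, x'_j, x'_{j+1}$ (apart from the fixed first and last positions). A chunk $[s,e]$ is determined by three facts: that $e$ is a cut, that $s-1$ is a cut (or $s$ is the first position), and that no $j$ with $s\le j<e$ is a cut. Hence the chunk, and its content, depends only on $\pi$ applied to the symbols in positions $s-1,\dots,e+1$.

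Next I would fix a chunk $C=[s,e]$ that does not intersect the extended window $x_1Wy_1y_2$. It then lies entirely to the left of $x_1$, or entirely to the right of $y_2$. I would check which cut tests could read a symbol of $W$. A test at position $j$ reads $W$ only if $j$ lies in $W$, at $x_1$, or at $y_1$. So every cut test at a position outside the extended window reads only symbols outside $W$, and the test at $y_2$ (which reads $y_1,y_2$ and the next symbol) is also unaffected.

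\emph{Left case.} Here $e$ is at or before the position just before $x_1$. The tests at $s-1,\dots,e$ read symbols up to position $e+1\le x_1$, and $x_1$ itself is not in $W$, so every test involved is unchanged. Since $e$ is a cut both before and after the edit, $C$ is still a chunk with the same content.

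\emph{Right case.} Here $s$ is after $y_2$, so $s-1\ge y_2$. The test at $s-1$ reads from $y_1$ onward, which is outside $W$. The tests inside $C$ are likewise unchanged. Again $C$ survives intact. This explains why the window is extended two symbols to the right but only one to the left: the left delimiter of a chunk is the cut at $s-1$, whose test reads back one further position.

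The main obstacle is the run-collapsing step. An edit can merge or split runs, for example when a symbol of $W$ equals $x_1$ or $y_1$, and this changes the indexing of $X'$. My first move would be to work at the level of $X'$ and treat $W$ as the collapsed symbols touched by the edit. Because $\pi(\langle a,k\rangle)=\pi(a)$, runs that change only in length do not change any cut test. I would also use the convention that the appendix initially disregards runs, and argue that positions outside the extended window keep their identity up to a shift. I would further note that the boundary conventions for $\pi(x'_1)$ and $\pi(x'_{n'})$ apply only at the string ends, which are again outside the window unless the window touches them.
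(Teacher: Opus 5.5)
Your proof is correct and follows essentially the paper's argument. A cut at position $j$ depends only on $\pi$ at $j-1,j,j+1$, so chunks ending at or before $x_2$ are decided by tests that read at most up to $x_1$, and chunks starting at or after $y_3$ are decided by tests that read only from $y_1$ onward; the paper states only these two worst-case chunks, while you also check the internal non-cut tests, and both arguments set runs aside here. One small slip: in your first paragraph the chunk $[s,e]$ depends on positions $s-2,\dots,e+1$, not $s-1,\dots,e+1$, because the test at $s-1$ reads $x'_{s-2}$; your right-case analysis and your explanation of the two-symbol right extension already use this correct range.
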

\begin{proof}
Let a chunk finish before the extended window $x_1 W y_1 y_2$, that is, (in the worst case) extending up to $x_2$ in the context $x_3 x_2 x_1 W y_1 y_2$. Its ending position depends on the local minimality of $x_2$, which depends on $\pi(x_3) \pi(x_2) \pi(x_1)$, but not on $W$. Similarly, let a chunk start right after the extended window, (in the worst case) at $y_3$ in the context $x_1 W y_1 y_2 y_3$. Its starting position depends on the minimality of $y_2$, which depends on $\pi(y_1) \pi(y_2) \pi(y_3)$, but not on $W$.  
\end{proof}

Since local minima must be separated by at least one other symbol, and the extended window is of length 7, it cannot intersect more than 4 consecutive chunks. Further, the edits can leave at most 3 local minima within the extended window, as proved next.

\begin{lemma} \label{lem:targetW}
There can be at at most 3 local minima in $x_1 W' y_1 y_2$.
\end{lemma}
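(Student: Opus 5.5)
The plan is a parity argument over the interior positions of the extended window. I will read ``local minimum in $x_1 W' y_1 y_2$'' exactly as the chunking rule defines it. A position $i$ is a local minimum when $\pi(z_{i-1}) > \pi(z_i) < \pi(z_{i+1})$, and this condition requires both neighbours $z_{i-1}$ and $z_{i+1}$. Applied to the 7-symbol string $z_1\cdots z_7 = x_1 W' y_1 y_2$, the candidates are therefore only the five interior positions $z_2,\dots,z_6$, namely the four symbols of $W'$ and $y_1$. As in the surrounding discussion, runs are disregarded.

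The first step is to record that no two consecutive positions can both be local minima. The construction of $X'$ collapses runs, so $\pi(z_j)\neq\pi(z_{j+1})$ for every $j$, as noted right after the LCP definition. If $z_j$ and $z_{j+1}$ were both minima we would need $\pi(z_j)<\pi(z_{j+1})$ and $\pi(z_{j+1})<\pi(z_j)$ at once, which is impossible. This fact is also what underlies Lemma~\ref{lem:length2}.

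The second step is the counting. The set of minima is an independent set in the path $z_2 z_3 z_4 z_5 z_6$ of five vertices, and such a set has at most $\lceil 5/2\rceil = 3$ elements. The bound is attained only by $\{z_2,z_4,z_6\}$, i.e.\ by the first and third symbols of $W'$ together with $y_1$. Hence there are at most $3$ local minima in $x_1W'y_1y_2$, as claimed.

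The main obstacle is justifying that $x_1$ and $y_2$ should not be counted. In the infinite string one can certainly have $x_1$ and $y_2$ both minima together with two minima inside $W'$, which would give $4$. I would handle this by appealing to the proof of Lemma~\ref{lem:sourceW}. The minimality of $y_2$ depends on $y_3$, which lies outside the extended window. A boundary at $x_1$ closes a chunk ending at the left edge of the window. In the chunk-counting argument that follows, such a boundary therefore delimits the region rather than creating a new chunk inside it. So only the interior minima bounded here matter for the next step of the appendix, which uses at most $3$ cuts inside the window to bound the number of new chunks, and hence of edits propagated to $X_{i+1}$, by $4$.
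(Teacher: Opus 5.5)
Your counting step (no two adjacent minima because $\pi(z_j)\neq\pi(z_{j+1})$, then an independent-set bound on a path) is the same as the paper's. The gap is in which positions you count, and it occurs at $x_1$.

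The paper counts exactly the positions whose minimality the edits can change. These are $x_1$, whose status is decided by $\pi(x_2)>\pi(x_1)<\pi(w'_1)$ with $w'_1\in W'$, the symbols of $W'$, and $y_1$. It drops only $y_2$, since $\pi(y_1)\pi(y_2)\pi(y_3)$ involves no symbol of $W'$.

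You discard $x_1$ as well, in two ways. First, you treat the 7-symbol window as a standalone string, but the chunking rule evaluates minimality in all of $X'$, where $x_1$ does have both neighbours. Second, you assert that a cut at $x_1$ ``delimits the region rather than creating a new chunk inside it''. That assertion is false: the edits can toggle the cut at $x_1$. The example preceding Lemma~\ref{lem:sourceW}, $cb|\underline{cdcb}\to cb\underline{a}|\underline{bcd}$, does exactly that: $x_1=b$ stops being a minimum and the chunk $cb$ becomes $cba$.

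This matters downstream. The number of chunks meeting $x_1W'y_1y_2$ is one plus the number of cuts at positions $x_1,\dots,y_1$. So a bound of $3$ cuts on $W'y_1$ alone does not give the at most $4$ chunks used in the next step unless $x_1$ is also controlled.

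The repair is one line and turns your argument into the paper's. Run the path argument over the six positions $x_1W'y_1$. Because $x_1$ and $w'_1$ cannot both be minima, you still get at most $\lceil 6/2\rceil=3$.

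Also, the right reason for excluding $y_2$ is not that its minimality depends on $y_3$ outside the window; the minimality of $x_1$ likewise depends on $x_2$ outside it. The reason is that $y_2$'s minimality does not depend on $W'$, so the edits leave it unchanged.

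Your observation that a literal count over all seven positions can reach $4$ is correct. The paper's proof implicitly restricts the count to the positions the edits can affect.
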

\begin{proof}
The edits in $W'$ might affect the local minimality of $x_1$ and of $y_1$, but not of $y_2$ because that depends on $\pi(y_1)\pi(y_2)\pi(y_3)$ in the context $x_1 W' y_1 y_2 y_3$. Since $|x_1 W' y_1| \le 6$, it can contain at most 3 local minima.
\end{proof}

Those 3 local minima induce 3 consecutive chunk endpoints, which affect at most 4 consecutive chunks. Those are obtained from the 4 chunks intersected by the extended window via at most 4 chunk edits (insertions, deletions, or replacement of chunks).

Assume now that our 4 edit operations include $i \le 4$ insertions. Then the window $W$ is of length $4-i$, and the extended window is of length $7-i$. After the edits, $W'$ is of length 4 and its extended window of length 7. If the edits include $d$ deletions, then the window $W'$ after the edits is of length $4-d$, and its extended window of length $7-d$. In all cases, the edits convert at most 4 chunks into other 4 chunks, which can be obtained with at most 4 chunk edits.

Finally, let us consider the runs. Four edits strictly within a run $\langle a,k\rangle$ produce $\langle a,k_1\rangle W' \langle a,k_2\rangle$ for some window $W'$ of length at most 4, and some $k_1$ and $k_2$. Assume we temporarily replace the original $\langle a,k\rangle$ by $\langle a,k_1\rangle \langle a,k_2\rangle$; if $\langle a,k\rangle$ ended a chunk we assume the chunk ends at $\langle a,k_2\rangle$, so our change affects only the chunk where $\langle a,k\rangle$ belongs. Now we insert $W'$ between $\langle a,k_1\rangle$ and $\langle a,k_2\rangle$ to obtain the desired $\langle a,k_1\rangle W' \langle a,k_2\rangle$. In this case, the window $W$ is the empty string and its extended window is $x_1 y_1 y_2$, with $x_1=\langle a,k_1\rangle$ and $y_1=\langle a,k_2\rangle$. The proofs of Lemmas~\ref{lem:sourceW} and \ref{lem:targetW} apply verbatim, so still four chunk edits suffice. 

In case the four edits overlap a run $\langle a,k\rangle$, the analysis is the same without $\langle a,k_1\rangle$ or $\langle a,k_2\rangle$. A special case arises when the edits fuse two runs, as deleting $b$ in $\langle a,2\rangle b \langle a,2\rangle$. This case, which converts $\langle a,k_1\rangle W \langle a,k_2\rangle$ into $\langle a,k\rangle$, is the exact opposite of the one we analyzed in the previous paragraph, so by the symmetry of the edit operations it is still handled with 4 chunk edits.

\end{document}